\renewcommand{\baselinestretch}{1.1}
\title{On the assertion that PCT violation implies Lorentz
non-invariance}
\author{Michael~D\"utsch\,\dag\ and Jos\'e M. Gracia-Bond\'ia\,\ddag
\\ \\
\dag\, Institut f\"ur Theoretische Physik, Universit\"at G\"ottingen
\\
Friedrich-Hund-Platz 1, 37077 G\"ottingen, Germany
\\ \\
\ddag\, Departamento de F\'isica Te\'orica, Universidad de Zaragoza
\\
50009 Zaragoza, Spain
\\
and
\\
Instituto de F\'{\i}sica Te\'orica, CSIC--UAM, Madrid 28049, Spain
}
\date{\today}
\newcommand{\Th}{\Theta}          
\newcommand{\sS}{\mathcal{S}}     
\newcommand{\bN}{\mathbb{N}}      
\newcommand{\bR}{\mathbb{R}}      
\newcommand{\nS}{\mathbf{S}}      
\newcommand{\nT}{\mathbf{T}}      
\DeclareMathOperator{\supp}{supp} 
\newcommand{\ovl}{\overline}      
\newcommand{\7}{\dagger}          
\newcommand{\vecform}{\bm}              
\newcommand{\xx}{\vecform{x}}           
\newcommand{\word}[1]{\quad\mbox{#1}\quad} 
\def\wick:#1:{\mathopen:#1\mathclose:} 
\def\scal<#1|#2>{\langle#1\mathbin|#2\rangle} 
\theoremstyle{plain}
\newtheorem{thm}{Theorem}           
\newtheorem{prop}[thm]{Proposition} 
\newtheorem{hyp}{Assumption}        
\theoremstyle{definition}
\newtheorem{remk}[thm]{Remark}      
\newcommand{\hideqed}{\renewcommand{\qed}{}} 
\DeclareRobustCommand{\qned}{
  \ifmmode
  \else \leavevmode\unskip\penalty9999 \hbox{}\nobreak\hfill
  \fi
  \quad\hbox{\qnedsymbol}}
\newcommand{\qnedsymbol}{$\boxminus$} 
\renewcommand{\section}{\@startsection{section}{1}{\z@}%
                        {-3.5ex \@plus -1ex \@minus -.2ex}%
                        {2.3ex \@plus.2ex}%
                        {\normalfont\large\bfseries}}
\renewcommand{\subsection}{\@startsection{subsection}{2}{\z@}%
                        {-3.25ex \@plus -1ex \@minus -.2ex}%
                        {1.5ex \@plus .2ex}%
                        {\normalfont\normalsize\bfseries}}
\renewcommand{\@dotsep}{200} 
\begin{document}

\maketitle

\begin{abstract}
Out of conviction or expediency, some current research programs
\cite{Kbook,KR,FA,DA} take for granted that ``PCT violation implies
violation of Lorentz invariance''. We point out that this claim
\cite{colinamuyverde} is still on somewhat shaky ground. In fact, for
many years there has been no strengthening of the evidence in this
direction. However, using causal perturbation theory, we prove here
that when starting with a local PCT-invariant interaction, PCT
symmetry can be maintained in the process of renormalization.
\end{abstract}

\renewcommand{\baselinestretch}{1}


\section{Introduction}

In dealing with fundamental questions of science, it may be advisable
to take a cue from knowledgeable philosophers. A recent account on
PCT%
\footnote{Since we stand from \textit{causal perturbation theory}, we
do not plump for the nowadays popular~``CPT''.}
invariance by one such~\cite{gabriel} registers the fact that
arguments for PCT conservation in relativistic field theory fall into
two neatly separated classes. Heuristic treatments essentially amount
to observing that ``it does not appear possible to construct a
`reasonable' interaction which violates PCT''~\cite{casadeorange}.
Rigorous proofs were often based on the Wightman axioms%
\footnote{Among the treatments of Wightman theory, the traditional
ones \cite{StreaterW,Jost} remain the best, in our mind. With some
tweaking of the original axioms ---exemplified in \cite{himself} for
the Maxwell field--- this method does prove PCT symmetry for many
kinds of free fields.}
or the axioms of algebraic QFT ---the exceedingly beautiful theory
in~\cite{GL95} comes to mind--- so holding within their limited domain
of validity.

On this stark dichotomy we quote~\cite{gabriel}. ``We are thus faced
with the following dilemma: (A)~The Weinberg and textbook Lagrangian
formalisms are complete but typically mathematically ill-formed.
(B)~The axiomatic and algebraic formalisms are incomplete but
mathematically well-formed.'' \textit{Completeness} in context means
that non-trivial, realistic interacting quantum field models can be
formulated in the approach.

However, one paper in the literature presumes to bridge the chasm,
claiming in its title ``PCT violation implies violation of Lorentz
invariance". Beyond preambles, reaching the conclusion there takes a
grand total of 20~half-lines and \textit{one} displayed
formula~\cite{colinamuyverde}.
Such a \textit{blitzkrieg} might breed distrust. Ponder for instance the
classic proof by Epstein of PCT invariance of the $\nS$-matrix for
theories of local observables~\cite{E67}. It goes for eighteen tight
pages of complex and functional analysis. Within its framework it
remains state of the art ---consult~\cite{Arakiri}. At its end Epstein
declares ``\ldots it hardly needs to be remarked that the result is
not expected to strengthen the evidence for the PCT invariance of
nature''.

What about the literature on PCT conservation for \textit{particular}
interacting models in the realm of renormalized, perturbative QFT? On
the basis of the Glaser--Lehmann--Zimmermann~(GLZ) theorem
\cite{GreatOldMasters}, Steinmann undertook in~\cite{SteinmannEin} to
construct a Wightman-like perturbative expansion for the ostensibly
PCT-invariant self-interaction vertex of a neutral scalar field. After
formidable prerequisites, his proof of PCT symmetry for the model goes
on for more than ten~pages. A different tack was taken
in~\cite{Scharf}: perturbative QED is developed in terms of
time-ordered products (TOP), constructed by means of causal
renormalization in the manner of Epstein and Glaser~\cite{EpsteinG}.
Painstakingly as well, Scharf there manages to show that TOP can be
forged in a PCT invariant way. (His method is arguably valid for any
model involving P-, C- and T-invariant vertices to begin~with.)

Surely that was enough for many an expert not to bother with
\cite{colinamuyverde}. However, the stakes have become higher of late.
For an incautious reader of \cite{colinamuyverde}, a failure of PCT
conservation in nature would lead ``beyond special relativity''
automatically. Since vast current research programs on possible
violation of Lorentz and PCT symmetries in nature appear to assume
this, to keep ignoring the issue will not do. As well, experimental
results as diverse as~\cite{enlafrente} and \cite{enlaboca},%
\footnote{Long-standing tensions in the neutrino sector motivated the
path-breaking paper \cite{BL}. Now the tension has been relieved by
the update~\cite{delosultimosdias}.}
where Lorentz invariance and symptoms of apparent PCT violation seem
to coexist, must give us pause.

In the next section of this letter we try to puzzle out the argument
of \cite{colinamuyverde}. We point out that the assumptions needed to
apply it have not been verified for any non-trivial, realistic model.
In Section~3 we review the treatment of interacting fields in the
framework of causal perturbation theory. This prepares the ground for
Section~4, where we strengthen the perturbative treatment of PCT
symmetry by a different path altogether. Section~5 contains our final
situation assessment.

\section{A bridge too far}

An interacting QFT model is defined in \cite{colinamuyverde} as
Lorentz invariant if the ``$\tau$-functions'' (vacuum expectation
values of TOP) are Lorentz covariant. We do not take exception to
this. The $\tau$-functions are exhibited there as
\begin{equation}
\tau^{(n)}(x_1,\ldots,x_n) := \sum_P H\big(t_{P1},\ldots,t_{Pn}\big)
W^{(n)}(x_{P1},\ldots,x_{Pn});
\label{eq:quosque-perveniatur}
\end{equation}
where $x_i=(t_i,\xx_i)$, the sum is over permutations $P$ of $n$
points, $W^{(n)}$ denotes putative Wightman functions, and $H$ is the
Heaviside function which enforces $t_{P1}\ge\cdots\ge t_{Pn}$. This is
\textit{the} formula. Then comes the punch~line: the $W$-functions are
to be shown to be \textit{weakly local}, so PCT must hold.

\smallskip

Bold identities like \eqref{eq:quosque-perveniatur} are haunted by the
question of existence. In plain language: it is not quite clear what
either side of the formula means. Helpfully, aside from vacuum
expectation values having been taken, we notice that
\eqref{eq:quosque-perveniatur} is identical in form to~(38) in
\cite{EpsteinG}. There the fields being time-ordered are Wick
polynomials of \textit{free} (incoming) fields, and the formula is
given as a \textit{tentative definition} of its left hand side. As
such it is nothing but the ``solution'' for TOP in terms of
unrenormalized Feynman graphs. Epstein and Glaser hasten to indicate
that expressions like the right hand side in
\eqref{eq:quosque-perveniatur} are illegitimate for $n>2$, since the
$W^{(n)}$ are distributions, whose product with the Heaviside
functions is undefined: this is merely an instance of the ultraviolet
problem of perturbation theory. On the face of it equation
\eqref{eq:quosque-perveniatur}, as it appears in
\cite{colinamuyverde}, ignores the need for renormalization at its
peril.

Of course, the author of \cite{colinamuyverde} is referring to
$W$-functions for interacting fields. The issue of bad definition of
\eqref{eq:quosque-perveniatur} still stands. Please bear with us, as
we attempt to salvage it. It may be argued that, whatever the messy
avatars of a proper construction for the
left hand side,%
\footnote{It may involve new couplings and even new fields
\cite{Aurora}.}
it is to respect~\eqref{eq:quosque-perveniatur} insofar as $t_i\ne
t_k$ for all $i<k$ (note that asking solely for $x_i\ne x_k$ for all
$i<k$ would be insufficient). So begin with any point
$(x_1,\ldots,x_n)$ such that the sum of linear combinations of the
differences $x_i-x_{i-1}$ with non-negative coefficients (one at least
being nonzero) is space-like, moreover fulfilling
$t_1>t_2>\cdots>t_n$; such points do exist. One may
use~\eqref{eq:quosque-perveniatur} there. Then perform a Lorentz
transformation such that the transformed point $(x'_1, \ldots,x'_n)$
satisfies $t'_n>\cdots>t'_2>t'_1$; such Lorentz transformations exist.
One may again enforce \eqref{eq:quosque-perveniatur} for
$(x'_1,\ldots,x'_n)$. Hence Lorentz invariance of the $\tau$- and
$W$-functions implies weak local commutativity for such points
$(x_1,\ldots,x_n)$:
\begin{equation*}
W^{(n)}(x_1,\ldots,x_n) = W^{(n)}(x_n,\ldots,x_1).
\end{equation*}
This suffices for PCT invariance of the $W$-functions and PCT
covariance of the fields, by a well-trodden argument by
Jost~\cite{Jost}, \textit{provided} that Wightman's axiom 0 (about the
state space), axiom~I (about the domain and continuity of the fields)
and axiom II (about the Lorentz covariance of the fields) are verified
---we borrow the numbering of the axioms from \cite{StreaterW}.
Namely, Jost's proof crucially uses the analyticity properties of the
$W$-functions, which follow from these axioms. The fact that Greenberg
mentions this kind of additional assumptions only in a footnote to
\cite{colinamuyverde} may have caused misinterpretations of his
statement.

Obviously the discussion in \cite{colinamuyverde} may be relevant
only for interacting models satisfying the Wightman axioms except {\it
local commutativity}, that is to say
\begin{equation}
[\phi(x),\,\phi(y)] = 0 \word{for} (x-y)^2 < 0,
\label{eq:0.6}
\end{equation}
for the case of a boson field $\phi$ (this is Wightman's axiom III).
To wit, local commutativity implies trivially weak local
commutativity, and hence for a model satisfying \eqref{eq:0.6} one may
dispense with the above contortions, since one immediately faces the
question of applicability of Jost's proof.%
\footnote{Incidentally, if Lorentz covariance is assumed on the level
of {\it operators} (that is, for the fields and their TOP) and not
only on the level of vacuum expectation values, local commutativity
can be derived very easily. For the benefit of the reader this is done
in an appendix.}

If we were dealing with Wick polynomials of free fields and their $W$-
and $\tau$-functions, for which the Wightman axioms hold true, all we
would obtain from the argument in~\cite{colinamuyverde} is PCT
invariance of a \textit{free} theory. Now, to the best of our
knowledge, for non-trivial realistic models one cannot ascertain
analyticity of Wightman-like functions; hence the argument \textit{\`a
la Jost} in~\cite{colinamuyverde} flounders. While the assertion that
PCT conservation holds for everyday interacting relativistic theories
remains plausible, to the question whether it has been proved at the
required level of rigour, the clear and present answer is: only for a
class of models ---for instance QED in \cite{Scharf} as above said---
and for none by Greenberg's argument.

\section{On interacting fields in causal perturbation theory}

To deal seriously with interacting $W$-functions, we naturally have
recourse to a rigorous theoretical framework. Causal perturbation
theory by Epstein and Glaser ``is closest to the spirit of Wightman's
axioms''~\cite{PS} and well suited for our purpose. There are three
steps. It entails first constructing the TOP of Wick polynomials;
second, employing them to derive the interacting fields; third,
performing the adiabatic limit (if available).

We sketch now the necessary detour. The Epstein--Glaser procedure was
developed on the footsteps of St\"uckelberg~\cite{HA49},
Bogoliubov~\cite{BS} and Nishijima~\cite{PR61}. Let $W$ be the vector
space of \textit{local} Wick polynomials $A(x)$. The TOP $T_n$ are
(multi)linear, totally symmetric maps from $W^{\otimes n}$ into the
space of operator-valued tempered distributions,
\begin{equation*}
T_n\big(A_1(x_1) \cdots A_n(x_n)\big) \in \sS'\big(\bR^{4n}\big),
\end{equation*}
satisfying the Bogoliubov--Shirkov--Epstein--Glaser axioms
\cite{EpsteinG,BS,DF04}. Besides Lorentz invariance, there is mainly
causality:
\begin{equation}
T_n\big(A_1(x_1) \cdots A_n(x_n)\big) = T_l\big(A_1(x_1) \cdots 
A_l(x_l)\big) \, T_{n-l}\big(A_{l+1}(x_{l+1}) \cdots A_n(x_n)\big),
\label{eq:3}
\end{equation}
if $\{x_1,...,x_l\}\cap\big(\{x_{l+1},\ldots,x_n\}+\ovl
V_-\big)=\emptyset$. We use generating functionals of the form
\begin{equation}
\nT\big(e^{A(h)}\big) := {\bf1} + \sum_{n=1}^\infty \, \frac1{n!} \int
dx_1\,\cdots\,dx_n \, T_n\big(A(x_1) \cdots A(x_n)\big) \,
h(x_1)\cdots h(x_n);
\label{eq:5}
\end{equation}
in particular the $\nS$-matrix:
\begin{equation}
\nS(g) := \nT\big(e^{i\,V(g)}\big), \quad g \in \sS(\bR^{4}),
\label{eq:9}
\end{equation}
for $V\in W$ a suitable first-order interaction Lagrangian.

Coupling constants have been replaced by Schwartz ``switching
functions'' collectively denoted by~$g$; this aims to excise the
infrared problem while dealing with the ultraviolet one. The
$\nS$-matrix operator on the Fock space of the incoming fields as a
functional of~$g$ is a centrepiece of the theory. It acts as a
generating function for the interacting fields by the
Stepanov--Polivanov--Bogoliubov formula~\cite{BS}. For the retarded
ones:
\begin{equation}
A_{V(g)}(x)\equiv A_\mathrm{ret}(g;x) := -i\,\frac{\delta}{\delta
h(x)} \bigg\vert_{h=0}\, \nS(g)^{-1} \, \nT\big(e^{i\,(V(g) + A(h))}
\big) \word{for} A\in W.
\label{eq:peregrinus-ubique}
\end{equation}
Both notations will be used. TOPs of these (retarded) interacting
fields may as well be defined in an analogous way by higher
derivatives with respect to $h$ ---see formulas (75) and (76) in
\cite{EpsteinG}. They are all functionals of $g$. Retarded interacting
fields are causal in the sense that
\begin{equation}\label{caus-supp}
A_{V(g) + V_1(g_1)}(x) = A_{V(g)}(x) \word{if} \supp g_1\cap(x + \ovl
V_-) = \emptyset\,;
\end{equation}
where $V_1\in W$ is arbitrary. With an obvious (re)normalization of
the TOP, the interacting fields in causal perturbation theory satisfy
the Yang--Feldman--K\"all\'en equations.%
\footnote{In the exact context of a well behaved background field this
was verified in~\cite{bellicoso}. A full treatment of interacting
fields for QED in causal perturbation theory was given
in~\cite{NC90}.}

\smallskip

{}From the causally renormalized interacting fields the interacting
$W$-functions are still some way off. One has to perform the
adiabatic limit as $g\uparrow1$, bristling with difficulties
\cite{EpsteinG,EpsteinGAs,PhilippeRoland}.%
\footnote{We mention that obstructions to the adiabatic limit are
discussed in a recent paper, invoking Fedosov's index to relate causal
renormalization and Wightman theory via Haag's local algebraic
formalism~\cite{dulcecracovia}.}
 
The good news is that local commutativity of the interacting fields
can be proved rather straightforwardly prior to the adiabatic limit.
It follows from the GLZ relation \cite{GreatOldMasters},
\begin{equation}
i\,\big[A_{V(g)}(x)\,,\,B_{V(g)}(y)\big] = \frac{\delta}{\delta h(y)}
\bigg\vert_{h=0}\,A_{V(g)+B(h)}(x)-\frac{\delta}{\delta h(x)}
\bigg\vert_{h=0}\,B_{V(g)+A(h)}(y)\quad (A,B\in W),
\end{equation} 
and the causality of the retarded interacting fields
\eqref{caus-supp}. In the present procedure the GLZ relation is a
consequence of the definition \eqref{eq:peregrinus-ubique} of the
interacting fields ---see Proposition 2 in \cite{DF99}. Alternatively
it can be taken as a defining axiom for perturbative interacting
fields~\cite{SteinmannEin,DF99}. Another piece of good news is that,
provided that the infrared behaviour is good, one can work with the
incoming Fock vacuum. In~\cite{EpsteinG,EpsteinGAs} Epstein and Glaser
were able to show that Wightman-like functions exist for purely
massive, asymptotically complete models.

Now the bad news. There's the rub: as far as we know, the non-linear
Wightman conditions have not been proved to hold in the
Epstein--Glaser formalism (where fields and the state space itself are
constructed as formal power series). It looks like a fearsome task,
and it may well happen that imposing too good a behaviour leads back
to overly strong restrictions on the nature of interaction. The matter
of applicability of Jost's line of proof for PCT invariance in the
causal framework is undecided as yet.

\section{PCT invariance survives renormalization}

Since the underlying issue is renormalization, turning from
\textit{blitzkrieg} to humble trench warfare, we expound a pertinent
result.

\smallskip
  
\begin{hyp}
The free fields are PCT-covariant, that is, there exists an
anti-unitary operator $\Theta$ in the Fock space of free fields such
that
$$
\Theta\Omega =\Omega\quad\text{and}\quad
\Theta\,\Phi(x)\,\Theta^{-1} = \Phi^c(-x),
$$
where $\Omega$ is the Fock vacuum and
$\Phi^c$ is a suitable conjugate of the field $\Phi$ (passing to the
adjoint and multiplication by a suitable matrix). Say, for a charged
scalar field $\phi$ or for a Dirac field $\psi$:
\begin{equation*}
\Theta\,\phi(x)\,\Theta^{-1} = \phi^\7(-x), \qquad
\Theta\,\psi(x)\,\Theta^{-1} = -i\,\gamma_5\,{\psi^\7}^T(-x).
\end{equation*}
The PCT transformation is not usually an involution; for fermions it
holds $\Theta^2=(-1)^F$, where $F$ is the number operator of the
fields, which implies $\Theta^2\,\psi(x)\,\Theta^{-2}=-\psi(x)$ and
$\Theta^4={\bf1}$. We will simply assume that
\begin{equation}
\Theta^{2N} = {\bf1} \word {for some} N\in \bN \setminus\{0\}.
\label{eq:1}
\end{equation}
\end{hyp}

\begin{prop}
For hermitian or charged scalar fields and for Dirac fields, normal
ordering commutes with the PCT transformation, that is normally
ordered products of free fields are also PCT-covariant:
\begin{equation}
\Theta\,\wick:\Phi(x_1) \cdots \Phi(x_n):\,\Theta^{-1} =
\wick:\Theta\,\Phi(x_1) \cdots \Phi(x_n)\,\Theta^{-1}:=
\wick:\Phi^c(-x_1) \cdots \Phi^c(-x_n):.
\label{eq:2}
\end{equation}
\end{prop}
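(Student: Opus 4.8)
\medskip
\noindent\emph{Proof plan.}
The plan is to isolate the single structural input --- that $\Theta$ respects the splitting of a free field into creation and annihilation parts --- and then to feed it through the combinatorial definition of the Wick product. Throughout, write $\Phi(x)=\Phi^{+}(x)+\Phi^{-}(x)$, with $\Phi^{+}$ collecting the annihilation contributions and $\Phi^{-}$ the creation ones, relative to the free (incoming) vacuum $\Omega$; likewise $\Phi^{c}=(\Phi^{c})^{+}+(\Phi^{c})^{-}$. Observe that forming $\Phi^{c}$ --- a constant matrix times the adjoint, with a transposition of the spinor index in the Dirac case, as in Assumption~1 --- carries the creation part of $\Phi$ to the annihilation part of $\Phi^{c}$ and vice versa.

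The first and only substantive step is to establish
\begin{equation*}
\Theta\,\Phi^{+}(x)\,\Theta^{-1}=(\Phi^{c})^{+}(-x),\qquad
\Theta\,\Phi^{-}(x)\,\Theta^{-1}=(\Phi^{c})^{-}(-x).
\end{equation*}
The key fact is that $\Theta$ --- the geometric implementation of PCT on the free-field Fock space, hence the second quantization of the one-particle PCT and already pinned down by $\Theta\Omega=\Omega$ together with the covariance relation of Assumption~1, $\Omega$ being cyclic for the free fields --- commutes with the total number operator. Consequently $\Theta\,\Phi^{+}(x)\,\Theta^{-1}$ lowers the particle number by exactly one and $\Theta\,\Phi^{-}(x)\,\Theta^{-1}$ raises it by exactly one; matching, in $\Theta\,\Phi(x)\,\Theta^{-1}=\Phi^{c}(-x)$, the number-lowering and number-raising parts --- a decomposition that is unique --- then forces the two displayed relations. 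Concretely, in the mode expansion of Assumption~1, the anti-linearity of $\Theta$ conjugates the c-number amplitudes and replaces $e^{\mp ipx}$ by $e^{\pm ipx}$ while sending each annihilation operator to an annihilation operator of the conjugate field and each creation operator to a creation operator, the constant matrix defining $\Phi^{c}$ --- for the Dirac field, $-i\gamma_{5}$ together with the spinor transposition --- being merely transported along. It is here, and only here, that the anti-unitarity of $\Theta$ (as opposed to bare anti-linearity) and the exact shape of $\Phi^{c}$ are genuinely used; no operators are reordered at this stage, so no statistics sign appears, and keeping track of the $\gamma$-matrices in the Dirac case is the sole point of delicacy. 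I expect this step to be the main obstacle; everything beyond it is bookkeeping.

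What remains is combinatorics. By the definition of the Wick product relative to $\Omega$, substitute $\Phi(x_{i})=\Phi^{+}(x_{i})+\Phi^{-}(x_{i})$ and, in each of the $2^{n}$ resulting monomials, carry all creation factors to the left of all annihilation factors while preserving their relative orders; this produces a sign $\epsilon$ equal to $+1$ for the scalar fields and, for the Dirac field, the signature of the sorting permutation --- a quantity that depends only on the pattern of plus and minus signs, not on which field sits at which point. Now apply $\Theta(\cdot)\Theta^{-1}$ term by term, using that conjugation by $\Theta$ is multiplicative, $\Theta(AB)\Theta^{-1}=(\Theta A\,\Theta^{-1})(\Theta B\,\Theta^{-1})$, that its anti-linearity acts only on the real numbers $\epsilon$ and so leaves them fixed, and the two relations of the previous step. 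The result is the identical combinatorial sum with every $\Phi^{\pm}(x_{i})$ replaced by $(\Phi^{c})^{\pm}(-x_{i})$ and with the same signs $\epsilon$; read in reverse, this is precisely the Wick expansion of $\wick:\Phi^{c}(-x_{1})\cdots\Phi^{c}(-x_{n}):$, which gives \eqref{eq:2}. The same reasoning can be packaged through the generating identity $\wick:e^{\Phi(h)}:=e^{\Phi^{-}(h)}\,e^{\Phi^{+}(h)}$, conjugated by $\Theta$ and then differentiated in $h$ (with Grassmann-valued $h$ for the Dirac field), with no new idea. Finally, the hypothesis $\Theta^{2N}={\bf1}$ of \eqref{eq:1} is not needed here; it enters only later, when $\Theta$ is iterated on renormalized time-ordered products.
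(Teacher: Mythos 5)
Your argument is correct but takes a genuinely different route from the paper's. The paper proceeds by induction on $n$ using the recursive form of Wick's theorem \eqref{wick}; its single substantive step is that anti-unitarity converts the complex-conjugated contraction $(\Omega,\Phi(x_l)\Phi(x_n)\Omega)^*$ into $(\Omega,\Phi^c(-x_l)\Phi^c(-x_n)\Omega)$, after which the same recursion applied to the conjugate field $\Phi^c(-\cdot)$ closes the induction. You instead isolate the finer structural fact $\Theta\,\Phi^{\pm}(x)\,\Theta^{-1}=(\Phi^{c})^{\pm}(-x)$ and let the combinatorial definition of the Wick product carry the rest, with no induction. Each route has its merits. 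Yours locates exactly where anti-unitarity and the shape of $\Phi^{c}$ enter and makes the mechanism transparent; its price is that you must establish that $\Theta$ preserves the Fock grading, an extra ingredient beyond the bare covariance relation of Assumption~1, which you do concretely through the mode expansion and the identification of $\Theta$ as the second-quantized one-particle PCT. The paper's induction gets by with only anti-unitarity, the assumption $\Theta\,\Phi(x)\,\Theta^{-1}=\Phi^{c}(-x)$, and the fact that Wick's theorem holds also for the conjugate free field, and it rehearses the inductive scheme reused in the proof of the main theorem. Both proofs ultimately rely on $\Phi^{c}(-\cdot)$ being a bona fide free field with the standard normal ordering relative to $\Omega$. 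Your observation that the hypothesis \eqref{eq:1} plays no role here is correct and agrees with the paper, where it is invoked only in the symmetrization step \eqref{eq:18}.
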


\begin{proof}
Proceeding by induction on $n$, we use Wick's theorem in the form
\begin{align}
\wick:\Phi(x_1) \cdots \Phi(x_n): &= \wick:\Phi(x_1) \cdots
\Phi(x_{n-1}):\,\,\Phi(x_n)
\notag \\
&-\sum_{l=1}^{n-1}(\Omega,\Phi(x_l)\Phi(x_n)\,\Omega)\,\,
\wick:\Phi(x_1) \cdots \widehat{\Phi(x_l)} \cdots \Phi(x_{n-1}):.
\label{wick}
\end{align}
The notation means that $\Phi(x_l)$ is omitted. Since $\Theta$ is
anti-unitary, the induction assumption yields
\begin{align*}
&\Theta\,\wick:\Phi(x_1) \cdots \Phi(x_n):\,\Theta^{-1} =
\wick:\Phi^c(-x_1) \cdots \Phi^c(-x_{n-1}):\,\,\Phi^c(-x_n)
\\
&\quad -\sum_{l=1}^{n-1}(\Omega,\Phi(x_l)\Phi(x_n)\,\Omega)^*\,\,
\wick:\Phi^c(-x_1) \cdots \widehat{\Phi^c(-x_l)} \cdots
\Phi^c(-x_{n-1}): = \wick:\Phi^c(-x_1) \cdots \Phi^c(-x_n):,
\end{align*}
where we use that
\begin{equation*}
(\Omega\,,\,\Phi(x_l)\Phi(x_n)\,\Omega)^*=
(\Theta\Omega\,,\Theta\,\Phi(x_l)\,\Theta^{-1}\Theta\,\Phi(x_n)\,
\Theta^{-1}\Theta\,\Omega)
=(\Omega\,,\,\Phi^c(-x_l)\Phi^c(-x_n)\,\Omega)
\end{equation*}
and that \eqref{wick} holds also for the field $\Phi^c(-x)$.
\end{proof}

In this section we only study interactions $V$ which are local Wick
polynomials and scalar with respect to Lorentz transformations. One
additionally needs that $V$ be real: $V^\7(x)=V(x)$ on a dense
subspace. In various cases the above proposition implies that $V$ is
PCT-invariant
\begin{equation}
\Theta\,V(x)\,\Theta^{-1} = V(-x).
\label{eq:6}
\end{equation}
If $V$ is built from different kinds of free fields, it is perhaps not
entirely clear that the mentioned conditions on $V$ suffice for
\eqref{eq:6} to hold. Therefore, we take it as an assumption.

\begin{hyp}
The interaction $V$ is a local Wick polynomial $V\in W$, which is
PCT-invariant: $\Theta\,V(x)\,\Theta^{-1} = V(-x)\ .$
\end{hyp}

We turn to the PCT-transformation of TOP. Since PCT contains time
reversal, we need to introduce the {\it antichronological products}
$\big(\ovl T_n\big)_{n\in\bN}$. A sequence $(T_n)_{n\in\bN}$ of TOP
determines a pertinent sequence $\big(\ovl T_n\big)_{n\in\bN}$; the
$\ovl T_n$ are also multilinear and totally symmetric maps defined by
\begin{equation}
\ovl\nT\big(e^{-A(h)}\big): = \nT\big(e^{A(h)}\big)^{-1},
\label{eq:4}
\end{equation}
with $h\in\sS(\bR^4)$. In keeping with the previous notation,
\begin{equation*}
\ovl\nT\big(e^{-A(h)}\big) = {\bf1} + \sum_{n=1}^\infty \,
\frac{(-1)^n}{n!} \int dx_1\,\cdots\,dx_n \, \, \ovl T_n \big(A(x_1)
\cdots A(x_n)\big) \, h(x_1) \cdots h(x_n).
\end{equation*}
As the term ``antichronological'' indicates, the $\ovl T_n$ satisfy
\eqref{eq:3}, with the $\ovl T$-products on the right hand side in
\textit{reverse order}.

\begin{thm}
(a) Let $A^c(-x):=\Th A(x)\Th^{-1}$. The time ordered products $T_n$
can be (re)normalized in such a way that
\begin{align}
\Theta\,T_n\big(A_1(x_1) \cdots A_n(x_n)\big)\,\Theta^{-1} &= \ovl
T_n\big(A^c_1(-x_1) \cdots A^c_n(-x_n)\big),
\label{eq:7} \\
\Theta\,\ovl T_n\big(A_1(x_1) \cdots A_n(x_n)\big)\,\Theta^{-1}&=
T_n\big(A^c_1(-x_1) \cdots A^c_n(-x_n)\big),
\label{eq:8}
\end{align}
for arbitrary $A_i\in W$. That is to say, conjugation of $T_n$ and
$\ovl T_n$ by the PCT operator amounts to mutual exchange and
conjugation of the arguments.

(b) The $\nS$-matrix is PCT covariant:
\begin{equation}
\Theta\,\nS(g)\,\Theta^{-1} = \nS(\hat g)^{-1} \word{where} \hat g(x)
:= g(-x).
\label{10}
\end{equation}

(c) Advanced interacting fields
\begin{equation}
\label{eq:12}
A_\mathrm{adv}(g;x) := -i\,\frac{\delta}{\delta h(x)}\bigg\vert_{h=0}\,
\nT(e^{i\,(V(g) + A(h))})\,\nS(g)^{-1},
\end{equation}
are mapped by the PCT transformation into retarded interacting fields
\eqref{eq:peregrinus-ubique} and viceversa:
\begin{align}
\Theta\,A_\mathrm{adv}(g;x)\,\Theta^{-1} &= A^c_\mathrm{ret}(\hat
g;-x);
\notag \\
\Theta\,A_\mathrm{ret}(g;x)\,\Theta^{-1} &= A^c_\mathrm{adv}(\hat
g;-x).
\label{eq:13}
\end{align}
\end{thm}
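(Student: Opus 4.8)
Since \eqref{eq:7} is the heart of the matter, the plan is to establish it first, by induction on $n$ along the Epstein--Glaser construction, and then to read off \eqref{eq:8} and parts~(b), (c) from it together with Assumption~2 and the inversion relation~\eqref{eq:4}.

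\emph{Part (a), off the diagonal.} For $n=1$ one has $T_1(A(x))=\ovl T_1(A(x))=A(x)$, so \eqref{eq:7} and \eqref{eq:8} both say $\Th\,A(x)\,\Th^{-1}=A^c(-x)$, which is the definition of $A^c$. Assume \eqref{eq:7}--\eqref{eq:8} up to order $n-1$. Off the total diagonal $\Dl_n=\set{x_1=\cdots=x_n}$, causality~\eqref{eq:3} expresses $T_n$, on each region where one group of arguments lies outside the causal past of the complementary group, as a product $T_l(\cdots)\,T_{n-l}(\cdots)$ with $l<n$. Conjugating such a factorisation by $\Th$ --- an \emph{algebra} homomorphism, since anti-unitarity affects only the c-numbers --- and inserting the induction hypothesis turns it into the product $\ovl T_l(\cdots)\,\ovl T_{n-l}(\cdots)$ of the PCT-conjugates evaluated at the reflected points $A^c_i(-x_i)$, \emph{in the same left--right order}. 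Since the reflection $x_i\mapsto-x_i$ interchanges $\ovl V_-$ and $\ovl V_+$, the causal-support condition under which $T_n$ factorises in that order coincides with the one under which $\ovl T_n$, at the reflected arguments, factorises anti-chronologically in that same order. As these regions cover $\bR^{4n}\setminus\Dl_n$, relation~\eqref{eq:7} holds there.

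\emph{Part (a), the extension.} It remains to extend each $T_n$ across $\Dl_n$ so that \eqref{eq:7} survives; the two sides of \eqref{eq:7} can then differ only by an operator-valued distribution supported on $\Dl_n$ --- a local Wick polynomial times derivatives of $\dl$'s, of scaling degree at most that of $T_n$ --- and this is the main obstacle: one must exclude a ``PCT anomaly''. I would dispose of it by performing the Epstein--Glaser extension of the numerical distributions with a prescription (Taylor subtraction against a real, reflection-symmetric weight, or analytic regularisation) that manifestly commutes with both $x_i\mapsto-x_i$ and complex conjugation of the numerical distributions, hence with~$\Th$; as the lower orders are normalised the same way and \eqref{eq:7} already holds off $\Dl_n$, the extended $T_n$ then satisfies \eqref{eq:7} everywhere. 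Equivalently, keeping an arbitrary admissible extension: under $T_n\to T_n+\dl$ the defect of \eqref{eq:7} changes by $(\mathrm{id}-\mathcal R)\,\dl$, where $\mathcal R$ is the reflection-and-conjugation operator on local terms, which by~\eqref{eq:1} satisfies $\mathcal R^{2N}=\mathrm{id}$, so the defect can be absorbed once one checks --- using \eqref{eq:4} and $\Th^{2N}=\uno$ --- that it has no $\mathcal R$-invariant part. Granted \eqref{eq:7} for all $n$, relation~\eqref{eq:8} is not independent: \eqref{eq:4} gives $\Th\,\ovl\nT(e^{-A(h)})\,\Th^{-1}=\bigl(\Th\,\nT(e^{A(h)})\,\Th^{-1}\bigr)^{-1}$, into which \eqref{eq:7} feeds to produce \eqref{eq:8} order by order.

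\emph{Part (b).} Apply \eqref{eq:7} term by term in $\nS(g)=\nT(e^{i\,V(g)})$: the $n$th term $\frac{i^n}{n!}\int T_n(V(x_1)\cdots V(x_n))\,g(x_1)\cdots g(x_n)\,dx$ goes under $\Th\,(\cdot)\,\Th^{-1}$ to $\frac{(-i)^n}{n!}\int\ovl T_n(V^c(-x_1)\cdots V^c(-x_n))\,g(x_1)\cdots g(x_n)\,dx$ --- the flip $i^n\mapsto(-i)^n$ because $\Th$ is anti-linear and $g$ real. By Assumption~2, $V^c(-x)=\Th\,V(x)\,\Th^{-1}=V(-x)$, i.e.\ $V^c=V$; relabelling $x_i\mapsto-x_i$ and resumming gives exactly $\ovl\nT(e^{-i\,V(\hat g)})$, which equals $\nT(e^{i\,V(\hat g)})^{-1}=\nS(\hat g)^{-1}$ by~\eqref{eq:4}. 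That is~\eqref{10}.

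\emph{Part (c).} Expand \eqref{eq:peregrinus-ubique} and \eqref{eq:12} to first order in $h$ and conjugate by $\Th$. The anti-linear, order-preserving $\Th$ turns $-i$ into $+i$; it leaves $\nS(g)^{-1}$ on its side of the $A$-insertion, turning it into $\nS(\hat g)$ by~(b); and --- by the computation of~(b) now carrying one extra $A$-insertion, with $V^c=V$ and $x\mapsto-x$ --- it sends $\nT(e^{i(V(g)+A(h))})$ to $\ovl\nT(e^{-i(V(\hat g)+A^c(\hat h))})$, with $A^c(\hat h)=\int A^c(y)\,h(-y)\,dy$, so that the coefficient of $h(x)$ is a single $A^c(-x)$-insertion. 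Hence
\[
\Th\,A_\mathrm{adv}(g;x)\,\Th^{-1}=i\,\Bigl(\frac{\dl}{\dl k(-x)}\big|_{k=0}\ovl\nT\bigl(e^{-i(V(\hat g)+A^c(k))}\bigr)\Bigr)\,\nT\bigl(e^{i\,V(\hat g)}\bigr),
\]
while \eqref{eq:peregrinus-ubique} gives directly
\[
A^c_\mathrm{ret}(\hat g;-x)=-i\,\ovl\nT\bigl(e^{-i\,V(\hat g)}\bigr)\,\Bigl(\frac{\dl}{\dl k(-x)}\big|_{k=0}\nT\bigl(e^{i(V(\hat g)+A^c(k))}\bigr)\Bigr);
\]
these coincide because differentiating the identity $\ovl\nT(e^{-i(V(\hat g)+A^c(k))})\,\nT(e^{i(V(\hat g)+A^c(k))})=\uno$ in $k$ at $k=0$ and using Leibniz' rule interchanges the two. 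This is the first line of~\eqref{eq:13}; the second follows in the same way, with $T\leftrightarrow\ovl T$ and advanced $\leftrightarrow$ retarded. Thus only the extension step in part~(a) --- the absence of a PCT anomaly --- is genuinely non-routine; everything else is bookkeeping of anti-linearity, the reflection $x\mapsto-x$, the conjugation $A\mapsto A^c$, and the inversion formula~\eqref{eq:4}.
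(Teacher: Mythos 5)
Your overall scaffolding matches the paper's: $T_1$ is trivial, the off-diagonal case follows from causality and the induction hypothesis, (b) follows from (a) by unwinding \eqref{eq:5}, \eqref{eq:9}, \eqref{eq:4} and anti-linearity, and (c) follows from (a) and (b) by the alternative form of the advanced field and the Leibniz trick on $\ovl\nT\,\nT={\bf1}$. Those parts are fine.

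The gap is at the extension step, which you yourself flag as the only non-routine point. Neither of your two alternatives actually closes it. The difficulty is not whether a ``PCT anomaly'' $D=\Theta\,T_n\,\Theta^{-1}-\ovl T_n(\Theta\,\cdot\,\Theta^{-1})$ has an $\mathcal R$-invariant part; it is that \eqref{eq:7} and \eqref{eq:8} are \emph{coupled} through \eqref{eq:4}, because $\ovl T_n$ is not an independent distribution one may renormalize at will --- it is \emph{determined} by $T_n$ and the lower orders. So even after you choose an extension of $T^\circ_n$ that formally respects reflection-and-conjugation (your first alternative), you must still verify that the antichronological product it induces via \eqref{eq:4} satisfies \eqref{eq:8}; and your symmetrization/averaging framing (second alternative) faces exactly the same issue, since the ``$\mathcal R$'' you average over swaps $T_n$ and $\ovl T_n$ and one has to show this swap is compatible with \eqref{eq:4}. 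The paper resolves precisely this by writing the symmetrized extension \eqref{eq:18} as an explicit alternating sum over $T_n$ and $\ovl T_n$, positing the companion formula \eqref{eq:19} for $\ovl T^\mathrm{sym}_n$, and then proving \eqref{eq:19} by establishing the identity
\begin{equation*}
T^\mathrm{sym}_n+(-1)^n\ovl T^\mathrm{sym}_n = T_n+(-1)^n\ovl T_n,
\end{equation*}
which uses the unitarity relation \eqref{eq:20} expressing $T_n+(-1)^n\ovl T_n$ through lower orders alone. That lemma is the crux, and your proposal neither states nor proves an equivalent of it; as written, the ``no $\mathcal R$-invariant part'' condition is a red herring, since averaging over a finite-order group automatically lands in the invariant subspace --- the real question is whether averaging $T_n$ and the \eqref{eq:4}-defined $\ovl T_n$ are mutually consistent, and that is what you would need to supply.
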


\begin{remk}
Advanced interacting fields are ``anti-causal'' in the sense that
\begin{equation}\label{anticaus-supp}
A_{\rm adv}(g + g_1;x) = A_{\rm adv}(g;x) \word{if} \supp g_1\cap(x +
\ovl V_+) = \emptyset\ .
\end{equation}
The support properties \eqref{caus-supp} and \eqref{anticaus-supp} of
the retarded, respectively advanced interacting fields are consistent
with~\eqref{eq:13}.
\end{remk}

\begin{proof}
a) $\Longrightarrow$ (b): This is obtained straightforwardly by using
definitions \eqref{eq:5}, \eqref{eq:9}, \eqref{eq:4}, anti-linearity
of $\Theta$ and \eqref{eq:6}.

(a) and (b) $\Longrightarrow$ (c): Due to
\begin{equation*}
\frac{\delta}{\delta h(x)}\bigg\vert_{h=0}\,\nT\big(e^{i\,(V(g) +
A(h))}\big) \,\ovl\nT\big(e^{-i\,(V(g) + A(h))}\big) = 0,
\end{equation*}
the advanced field \eqref{eq:12} may alternatively be written as
\begin{equation*}
A_\mathrm{adv}(g;x) = i\,\frac{\delta}{\delta h(x)}\bigg\vert_{h=0}\,
\nS(g)\,\ovl\nT\big(e^{-i\,(V(g) + A(h))}\big).
\end{equation*}
With that and with (a), (b) and anti-linearity of $\Theta$ we obtain
\begin{align*}
\Theta\,A_\mathrm{ret}(g;x)\,\Theta^{-1} = i\,\frac{\delta}{\delta
h(x)} \bigg\vert_{h=0}\,\nS(\hat g)\,\ovl \nT\big(e^{-i\,(V(\hat g) +
A^c(\hat h))}\big) = A^c_{\rm adv}(\hat g;-x)\notag\ .
\end{align*}
The second relation in \eqref{eq:13} is proved analogously.

The proof of (a) goes by induction on $n$, following the
Epstein--Glaser construction. In contrast to the latter, we do not use
the distribution-splitting method; instead we borrow Stora's extension
of distributions method ---see \cite{PS,BF,BergB}. This shortens the
discussion.

The case $n=1$ follows from $T_1(A(x)):=A(x)=:\ovl T_1(A(x))$.

Going to the inductive step $n-1\mapsto n$, the causality
condition \eqref{eq:3} determines
\begin{equation}
T^\circ_n\big(A_1(x_1) \cdots A_n(x_n)\big) := T_n(A_1(x_1) \cdots
A_n(x_n)\big) \big\vert_{\sS(\bR^{4n} \setminus \Delta_n)},
\label{eq:16}
\end{equation}
where $\Delta_n:=\{(x_1,\ldots,x_n)\in \bR^{4n}\,|\,x_1=\cdots=x_n\}$,
uniquely in terms of the lower orders $(T_{l<n})$. Renormalization of
subgraphs is taken up by the inductive procedure. Since the
$(T_{l<n})$ are PCT-covariant \eqref{eq:7} by assumption, we obtain
\begin{align}
&\Theta\,T^\circ_n\big(A_1(x_1) \cdots A_n(x_n)\big)\,\Theta^{-1} =
\Th \, T_l\big(A_1(x_1) \cdots A_l(x_l)\big)T_{n-l}\big(A_1(x_{l+1})
\cdots A_n(x_n)\big)\Th^{-1}
\notag \\
&=\; \ovl T_l\big(A^c_1(-x_1) \cdots A^c_l(-x_l)\big)\,\ovl
T_{n-l}\big(A^c_{l+1}(-x_{l+1}) \cdots A^c_n(-x_n)\big) = \ovl
T^\circ_n\big(A^c_1(-x_1) \cdots A^c_n(-x_n)\big),
\label{eq:17}
\end{align}
if $\{x_1,...,x_l\}\cap\big(\{x_{l+1},...,x_n\}+\ovl V_-\big)=
\emptyset$, where $\ovl T^\circ$ is defined like in \eqref{eq:16} and
in the last equality it is used that $\ovl T^\circ$ factorizes
antichronologically. In the same way one derives \eqref{eq:8} for
$\Theta\,\ovl T^\circ_n\, \Theta^{-1}$.

It follows that PCT invariance \eqref{eq:7} and \eqref{eq:8} can be
violated only in the extension
\begin{equation}
\sS'\big(\bR^{4n} \setminus \Delta_n\big) \ni
T^\circ_n\big(A_1(x_1)\cdots\big) \to T_n\big(A_1(x_1)\cdots\big) \in
\sS'\big(\bR^{4n}\big),
\end{equation}
that is, in the process of renormalization.  To obtain a PCT-covariant
renormalization we take an arbitrary extension $T_n\big(A_1(x_1)
\cdots\big)\in \sS'\big(\bR^{4n}\big)$ of $T^\circ_n\big(A_1(x_1)
\cdots\big)$ fulfilling all other renormalization conditions (e.g.
Poincar\'e covariance, unitarity, power counting) and symmetrize it
with respect to the {\it finite} group generated by $\Theta$
\cite[App.~D]{DF04}:
\begin{align}
&T_n^\mathrm{sym}\big(A_1(x_1) \cdots A_n(x_n)\big) :=
\frac{1}{2N}\sum_{l=0}^{N-1}\Bigl(\Theta^{2l}\,
T_n\big(\Th^{-2l}A_1(x_1)\Th^{2l} \cdots \Th^{-2l}A_n(x_n)\Th^{2l}
\big)\,\Theta^{-2l}
\notag \\
&\quad +\Theta^{2l+1}\, \ovl T_n\big(\Th^{-2l-1}A_1(x_1)\Th^{2l+1}
\cdots \Th^{-2l-1} A_n(x_n)\Th^{2l+1}\big)\,\Theta^{-(2l+1)}\Bigr).
\label{eq:18}
\end{align}
This is also an extension of $T^\circ_n\big(A_1(x_1)\cdots\big)$, since
$\Theta^{2l}\,T_n(\cdots)\,\Theta^{-2l}$ and $\Theta^{2l+1}\,\ovl
T_n(\cdots)\,\Theta^{-(2l+1)}$ are respectively extensions of
$\Theta^{2l}\,T^\circ_n(\cdots)\,\Theta^{-2l}$ and $\Theta^{2l+1}\,\ovl
T^\circ_n(\cdots)\,\Theta^{-(2l+1)}$, and since
\begin{align*}
&\Theta^{2l}\,T_n^\circ\big(\Th^{-2l}A_1(x_1)\Th^{2l}\cdots\big)\,
\Theta^{-2l} = T_n^\circ\big(A_1(x_1)\cdots\big), \word{as well as}
\\
&\; \Theta^{2l+1}\,\ovl T_n^\circ\big(\Th^{-2l-1}A_1(x_1)
\Th^{2l+1}\cdots\big)\,\Theta^{-(2l+1)} = T_n^\circ \big(
A_1(x_1)\cdots\big),
\end{align*}
due to \eqref{eq:17}.

The only tricky part remaining is to show that the antichronological
product $\ovl T^\mathrm{sym}_n$ corresponding to $T^\mathrm{sym}_n$
according to its definition~\eqref{eq:4} can be written similarly as
\begin{align}
&\ovl T_n^\mathrm{sym}\big(A_1(x_1) \cdots A_n(x_n)\big) =
\frac{1}{2N}\sum_{l=0}^{N-1}\Bigl(\Theta^{2l}\, \ovl T_n\big(
\Th^{-2l}A_1(x_1) \Th^{2l} \cdots
\Th^{-2l}A_n(x_n)\Th^{2l}\big)\,\Theta^{-2l} \notag
\\
&\quad +\Theta^{2l+1}\,T_n\big(\Th^{-2l-1}A_1(x_1)\Th^{2l+1}
\cdots \Th^{-2l-1}A_n(x_n)\Th^{2l+1}\big)\,\Theta^{-(2l+1)}
\Bigr).
\label{eq:19}
\end{align}
Indeed, using this and \eqref{eq:1}, one sees that the
$T^\mathrm{sym}_n$ and the corresponding $\ovl T^\mathrm{sym}_n$
fulfil the assertions~\eqref{eq:7} and \eqref{eq:8}.

To prove \eqref{eq:19} we use that any extension $T_n$ and the
corresponding $\ovl T_n$ \eqref{eq:4} satisfy
\begin{align}
\big[T_n+ (-1)^n\ovl T_n\big]&\big(A_i(x_i)_{i\in\ovl n}\big) =
\sum_{M\subset\ovl n,\,1\le |M|\le n-1} (-1)^{|M|+1} \, \ovl
T_{|M|}\big(A_i(x_i)_{i\in M}\big)\,T_{n-|M|} \big(A_j(x_j)_{j\in\ovl
n\setminus M}\big)
\notag \\
&= \sum_{M\subset\ovl n\,,\,1\leq |M|\leq n-1}
(-1)^{n-|M|+1} T_{|M|}\big(A_i(x_i)_{i\in M}\big)\,\ovl T_{n-|M|}
\big(A_j(x_j)_{j\in\ovl  n\setminus M}\big),
\label{eq:20}
\end{align}
where $\ovl n:=\{1,...,n\}$ and $|M|$ is the size of block $M$; these
are just the relations $\ovl\nT(e^{-A(h)}\big) \nT\big(e^{A(h)}
\big)={\bf1}=\nT\big(e^{A(h)}\big)\ovl \nT\big(e^{-A(h)}\big)$. Since
the two expressions on the right hand side of \eqref{eq:20} are
inductively given, the proof is complete if we succeed to show that
expression \eqref{eq:19} for $\ovl T^\mathrm{sym}_n$ fulfils
\begin{equation}
T^\mathrm{sym}_n+(-1)^{n}\ovl T^\mathrm{sym}_n = T_n + (-1)^n\ovl 
T_n\,,
\label{eq:21}
\end{equation}
for the arbitrary extension $T_n$ used in \eqref{eq:18}. From
\eqref{eq:20} and PCT invariance of the lower orders $(T_{l<n})$ and
$(\ovl T_{l<n})$ we obtain
\begin{align*}
&\Theta\,\big[\ovl T_n + (-1)^nT_n\big]\big(A_i(x_i)_{i\in\ovl
n}\big) \, \Theta^{-1}
\\
&\quad = \sum_{M\subset\ovl n\,,\,1\leq |M|\leq n-1} (-1)^{n-|M|+1}
T_{|M|}\big(\Th\,A_i(x_i)_{i\in M}\,\Th^{-1}\big)\,\ovl
T_{n-|M|}\big(\Th \,A_j(x_j)_{j\in\ovl n\setminus M}\,\Th^{-1}\big)
\\
&\quad = \big[T_n + (-1)^n\ovl T_n\big]\big(\Th\,A_i(x_i)_{i\in\ovl
n}\,\Th^{-1}\big).
\end{align*}
As well,
$$
\Th\,\big[T_n + (-1)^n\ovl T_n\big]\big(A_i(x_i)_{i\in\ovl
n}\big)\Th^{-1} = \big[\ovl T_n + (-1)^nT_n\big]\big(\Th
A_i(x_i)_{i\in\ovl n}\Th^{-1}\big).
$$
Using these relations in the sum of \eqref{eq:18} and \eqref{eq:19}, we
indeed get \eqref{eq:21}:
\begin{align*}
&\big[T^\mathrm{sym}_n + (-1)^{n}\ovl T^\mathrm{sym}_n\big]
\big(A_1(x_1)\cdots\big) = \frac{1}{2N}
\sum_{l=0}^{N-1}\Bigl(\Theta^{2l} \,\big[T_n+(-1)^n\ovl T_n\big]
\big(\Th^{-2l}A_1(x_1)\Th^{2l}\cdots\big)\, \Theta^{-2l}
\\
&\quad +\Theta^{2l+1}\,\big[\ovl T_n+(-1)^n T_n\big]
\big(\Th^{-2l-1}A_1(x_1)\Th^{2l+1}\cdots\big)\,\Theta^{-(2l+1)}\Bigr)
\\
&\quad = \big[T_n + (-1)^n\ovl T_n\big]\big(A_1(x_1)\cdots\big).
\tag*{\qed}
\end{align*}
\hideqed
\end{proof}

\section{Conclusion}

The argument in reference \cite{colinamuyverde} fails to grapple with
the nitty-gritty of renormalization: it presumes that suitably
renormalized interacting fields exist such that Wightman's axioms 0 to
II are fulfilled. Instead a reasonable avenue is to concentrate just
on proving or disproving PCT invariance of the TOP for large enough
classes of models, constructed as rigorously as possible in
perturbation theory by dealing with renormalization through the causal
method.

We have precisely shown how PCT propagates through causal perturbative
renormalization. It pertains to point out caveats for our theorem. To
begin with, if someone ever were clever enough to come out with a
non-invariant $T_1$, there is nothing to do. We have focused on
the~TOP. For \textit{physical} PCT conservation to ensue, the
adiabatic limit of the model after renormalization must exist. As
pointed out by Kobayashi and Sanda time ago~\cite{banzai}, the remit
of any approach to PCT conservation is narrowed by the fact that both
heuristic and rigorous proofs make use of properties of asymptotic
states of particles that just do not apply in~QCD, whereupon the
elementary excitations of the field are confined. One has to admit
that large parts of the Standard Model disown the basic hypothesis of
any $\nS$-matrix theory.

Even for a garden-variety model like QED, the adiabatic limit
\cite{PhilippeRoland} exists only in a weak sense. In addition, as
conclusively shown by Herdegen, Dirac fields respecting Gauss' law
which are only ``spatially local'' provide the best tool to construct
QED, free from the infrared catastrophe~\cite{Herdegen}. We believe
that this should not decisively impinge on the issues considered
here~\cite{HerdegenII}; but the question is open to debate. It should
be stressed that dropping locality of the interactions, PCT
non-conservation and Lorentz invariance can perfectly
coexist~\cite{4mosqueteros,reckoning}. Now, the borderline between
local and non-local models is not nearly as neat as one would like. It
would appear that local and non-local fields may share the same
$\nS$-matrix~\cite{granego}, for that matter.

\appendix

\subsection*{Appendix. From Lorentz covariance to local commutativity}
\label{ssc:in-case}

To simplify the notation we consider a (possibly interacting) scalar
field $\phi(x)$.

\smallskip

\begin{hyp}
$\phi(x)$ is $L_+^\uparrow$-covariant, that is, there exists a
representation $L_+^\uparrow\ni\Lambda\mapsto U(\Lambda)$ such that
\begin{equation*}
\phi(\Lambda x) = U(\Lambda)\,\phi(x)\,U(\Lambda)^{-1}
\word{for all} \Lambda\in L_+^\uparrow.
\end{equation*}
In addition the TOP
\begin{equation*}
\label{0.2}
T(\phi(x)\,\phi(y)) := H(x^0 - y^0)\,\phi(x)\,\phi(y) + H(y^0 -
x^0)\,\phi(y)\,\phi(x) \word{for all} x^0 \ne y^0
\end{equation*}
(where $H$ denotes the Heaviside function) is $L_+^\uparrow$-covariant
\begin{equation}
T(\phi(\Lambda x)\,\phi(\Lambda y)) = U(\Lambda)\,T(\phi(x)\,\phi(y))
\,U(\Lambda)^{-1} \word{for all} x \ne y.
\label{eq:iam-senior-sed-cruda-deo-viridisque-senectus}
\end{equation}
In particular, assumption
\eqref{eq:iam-senior-sed-cruda-deo-viridisque-senectus} means that for
$x^0=y^0$ and $x\ne y$ the definition
\begin{equation*}
\label{0.4}
T(\phi(x)\,\phi(y)) := U(\Lambda)^{-1}\,T\big(\phi(\Lambda
x)\,\phi(\Lambda y)\big)\,U(\Lambda)
\end{equation*}
is independent of the choice of $\Lambda\in L_+^\uparrow$ with
$(\Lambda x)^0\ne(\Lambda y)^0$.
\end{hyp}

\smallskip

The following derivation of \textit{local commutativity}
\eqref{eq:0.6} is motivated by \cite{colinamuyverdeII}. Let $x,\,y$ be
given such that $(x-y)^2<0$. We choose $\Lambda_1,\Lambda_2 \in
L_+^\uparrow$ such that $(\Lambda_1 x)^0> (\Lambda_1 y)^0$ and
$(\Lambda_2 x)^0< (\Lambda_2 y)^0$. With that one obtains
\begin{align*}
\phi(x)\,\phi(y)& = U(\Lambda_1)^{-1}\,\phi(\Lambda_1
x)\,\phi(\Lambda_1 y)
\,U(\Lambda_1)=U(\Lambda_1)^{-1}\,T(\phi(\Lambda_1 x)\,\phi(\Lambda_1
y)) \,U(\Lambda_1)\notag\\
& =T(\phi(x)\,\phi(y))=
U(\Lambda_2)^{-1}\,T(\phi(\Lambda_2 x)\,\phi(\Lambda_2
y))\,U(\Lambda_2)\notag\\
& =U(\Lambda_2)^{-1}\,\phi(\Lambda_2 y)\,\phi(\Lambda_2 x)
\,U(\Lambda_2)=\phi(y)\,\phi(x). \qquad \square
\end{align*}

\subsection*{Acknowledgments}
\label{ssc:in-gratitude}

MD thanks the Departamento de F\'{\i}sica Te\'orica of the Universidad
de Zaragoza and the Instituto de Ciencias Matem\'aticas (CSIC--UAM,
Madrid) for their hospitality. We are deeply grateful to Masud
Chaichian and Anca Tureanu, who called the attention of JMG-B to
\cite{colinamuyverde}; to them, as well as to Philippe Blanchard,
Klaus Fredenhagen, Peter Pre\v{s}najder and Joseph C. V\'arilly for
very helpful discussions, suggestions and/or encouragement. Not least,
JMG-B is thankful to Gabriela~Barenboim, for prodding him on the
subject. He has been supported by grants FPA2009-09638 and
DGIID-DGA-E24/2, respectively of Spain's central and Aragon's regional
governments.

\end{document}